\newtheorem{theorem}{Theorem}[section]
\newtheorem{claim}[theorem]{Claim}
\newcommand{\posb}{\textsf{MoBS}}
\newcommand{\mobs}{\textsf{MoBS}}
\newcommand{\CV}{\textsf{CV}}
\newcommand{\BF}{\textsf{BF}}
\newcommand{\OR}{\textsf{OR}}
\newcommand{\unary}{\textsf{Unary Evaluation}}
\newcommand{\ue}{\textsf{UE}}
\newcommand{\binary}{\textsf{Binary Evaluation}}
\newcommand{\be}{\textsf{BE}}
\newcommand{\compare}{\textsf{Comparison}}
\newcommand{\sorting}{\textsf{Sorting}}
\newcommand{\cmos}{CMOS}
\newcommand{\dna}{DNA}
\newcommand{\prob}[1]{{\bf Pr}[#1]}
\newcommand{\var}[1]{{\bf Var}[#1]}
\newcommand{\E}{{\rm I\kern-.3em E}}
\newcommand{\blue}[1]{\textcolor{blue}{#1}}
\newcommand{\red}[1]{\textcolor{red}{#1}}
\newcommand{\magenta}[1]{\textcolor{magenta}{#1}}
\newcommand{\comment}[1]{\noindent\textbf{\red{/* } }
	\textit{\blue{{#1}}}\textbf{\red{~*/}}
	\marginpar{$\leftarrow$\fbox{\red{ $\blacklozenge$}}} }%
\def\john#1{\marginpar{$\leftarrow$\fbox{J}}\footnote{$\Rightarrow$~{\sf #1 \magenta{--John}}}}
\def\krishna#1{\marginpar{$\leftarrow$\fbox{K}}\footnote{$\Rightarrow$~{\sf #1 \blue{--Krishna}}}}
\newcommand{\comment}[1]{}
\def\john#1{}
\def\krishna#1{}
\title{Sustaining Moore's Law Through Inexactness \\ \large A Mathematical Foundation}
\author
{
John Augustine,$^{1\ast}$  Krishna Palem,$^{2}$ Parishkrati$^{1}$\\
\\
\normalsize{$^{1}$Department of Computer Science and Engineering, }\\
\normalsize{Indian Institute of Technology Madras, Chennai, India.}\\
\normalsize{$^{2}$Department of Computer Science, Rice University, Houston,
USA.}
\\~\\
\normalsize{$^\ast$Corresponding author; E-mail:  \url{augustine@iitm.ac.in}.}
}
\date{}
\begin{document}

\maketitle

\begin{abstract}
Inexact computing aims to compute good solutions that require considerably less resource --€" typically energy -- compared to computing exact solutions. While inexactness is motivated by concerns derived from technology scaling and Moore's law,  there is no formal or foundational framework for reasoning about this novel approach to designing algorithms. In this work, we present a fundamental relationship
between the quality of computing the value of a boolean function and the energy needed to compute it in a mathematically rigorous and general setting. On this basis, one can study the tradeoff between the quality of the solution to  a problem and the amount of
energy that is consumed. We accomplish this by introducing a computational model to classify problems based on notions of
symmetry inspired by physics.  We show that some problems are symmetric in that every input bit is, in
a sense, equally important, while other problems display a great deal of asymmetry in the importance
of input bits. We believe that our model is novel and provides a foundation for  inexact
Computing. Building on this, we show that asymmetric problems allow us to invest resources favoring the important bits -- a feature
that can be leveraged to design efficient inexact algorithms. On the negative side and in contrast, we can prove that the
best inexact algorithms for symmetric problems are no better than simply reducing the resource investment uniformly across all bits. Akin to classical theories concerned with space and time complexity, we believe the ability to classify problems as shown in our paper will serve as a basis for formally
reasoning about the effectiveness of inexactness in the context of a range of computational problems with energy being the primary resource.
\end{abstract}

Many believe that the exponential scaling afforded by Moore's law~\citep{moore} is reaching its limits
as transistors approach nanometer scales. Many of these limitations are based on physics based limits ranging
over thermodynamics and electromagnetic noise~\citep{kish} and optics~\citep{lithography}. Given that information technology is
the prime beneficiary of Moore's law, computers, memories, and related chip technologies are likely to be
affected the most. Given the tremendous value of sustaining Moore's law through information technology in a broad sense,
much effort has gone into sustaining Moore's law, notably through innovations in material science and electrical engineering.
Given the focus on information technology, a central tenet of these innovations has been to preserve
the behavior of \cmos~transistors and computing systems built from them.

While many of these innovations revolve around non-traditional materials such as graphene \citep{graphene1,graphene2}
supplementing or even replacing \cmos~\citep{graphene-computer}, exciting developments based on  alternate and
potentially radical models of computing have also emerged. Notable examples include \dna~\citep{dna1, dna2}, and quantum
computing frameworks~\citep{quantum1, quantum2, quantum3}. However, these exciting approaches and alternate models
face a common and potentially steep hurdle to becoming deployable technologies leading to  the preeminence of \cmos~as the material of choice. This brings the importance of Moore's law back to the fore and consequently,
in the foreseeable future,
the centrality of \cmos~to growth in information technologies remains.

The central theme of this paper is to develop a coherent theoretical foundation with the goal of reconciling these
competing concerns. On the one hand, continuing with \cmos~centric systems is widely believed to result in
hardware that is likely to be erroneous or function incorrectly in part. On the other hand, dating back
to the days of Alan Turing~\citeyearpar{turing} and explicitly tackled by von Neumann~\citeyearpar{vonNeumann}, a computer --- the
ubiquitous information technology vehicle --- has an unstated expectation that it has to function {\em correctly}.
This expectation of computers always functioning correctly as an essential feature
is at the very heart of our alarm about the doomsday scenario associated with the {\em end
of Moore's law}. For if one can use computers with faulty components as they are with concomitant but acceptable
errors in the computation, we could continue to use \cmos~transistors albeit functioning in a potentially unreliable regime.

Over the past decade,
this unorthodox approach to using a computer and related hardware such as memory
built out of faulty components, and used in this potentially faulty mode, referred to as {\em inexact computing},
has emerged as a viable alternative to coping with the Moore's law cliff.
\citet{PL13} and \citet{KP14} (and references therein) provide a reasonable overview of inexact computing practice.
 At its core, the counterintuitive
thesis behind inexactness is to note that, perhaps surprisingly, working with faulty components can in fact
result in computing systems that are {\em thermodynamically} more efficient~\citep{palem0,palem1, palem-cmos-japan}.
This approach simultaneously appeals to another hurdle facing the sustenance of Moore's law. Quite often referred to as the energy-wall or
power-wall, energy dissipation has reached such prohibitive levels that being able to cope with it is
the predominant concern in building
computer systems today. For example, to quote from an article from the New York Times~\citep{nyt} about the
potential afforded through inexact computing: ``If such
a computer were built in today's technologies, a so-called exascale computer would consume electricity equivalent
to 200,000 homes and might cost \$20 million or more to operate.''

While individual technological artifacts demonstrating the viability of inexact computing
might be many, a coherent understanding of how to design algorithms --- essential
to using inexact computing in large scale --- and understand the inherent limits
to the power of this idea  are not there. Such characterizations are typically the purview
of  theoretical computer
science, where   questions of designing efficient algorithms,  and inherent limits to
being able to design efficiently are studied.
While algorithm
design is concerned with finding efficient ways of solving problems, inherent
limits allow us to understand what is {\em not} possible under any circumstance within
the context of a mathematically well-defined model. Understanding what is inherent to computing in abstract terms has been a significant part of
these enquiries, and has evolved into the field referred to as computational complexity~\citep{AB09,MM11}.
 In this paper, we
present a computational complexity theoretic foundation to characterizing inexactness, and to the best of our knowledge for the
first time.

As in classical complexity theory, the atomic object at the heart of our foundation is a {\em bit}
of information. However, inexactness allows something entirely novel: {\em each bit}
is characterized by two attributes or dimensions, a {\em cost} and a {\em quality}.
Historically, physical {\em energy} was the cost and the {\em probability of correctness} was the quality~\citep{palem1,palem2}. As shown in Figure~\ref{fig:1} (originally reported in ~\citep{palem-cmos-japan}),
under this interpretation, a cost versus quality relationship was measured in the context of physically constructed \cmos~
gates. More recently, \citet{FKBSA15} have presented a voltage-scaled SRAM along with a characterization of the energy/error tradeoff,
where, unsurprisingly, we see that the bitcell error rate (BER) drops exponentially as  $V_{dd}$ increases.

\begin{figure}[htbp]
\begin{center}
    \includegraphics[scale=0.9,clip=true, trim=170 320 150 320]{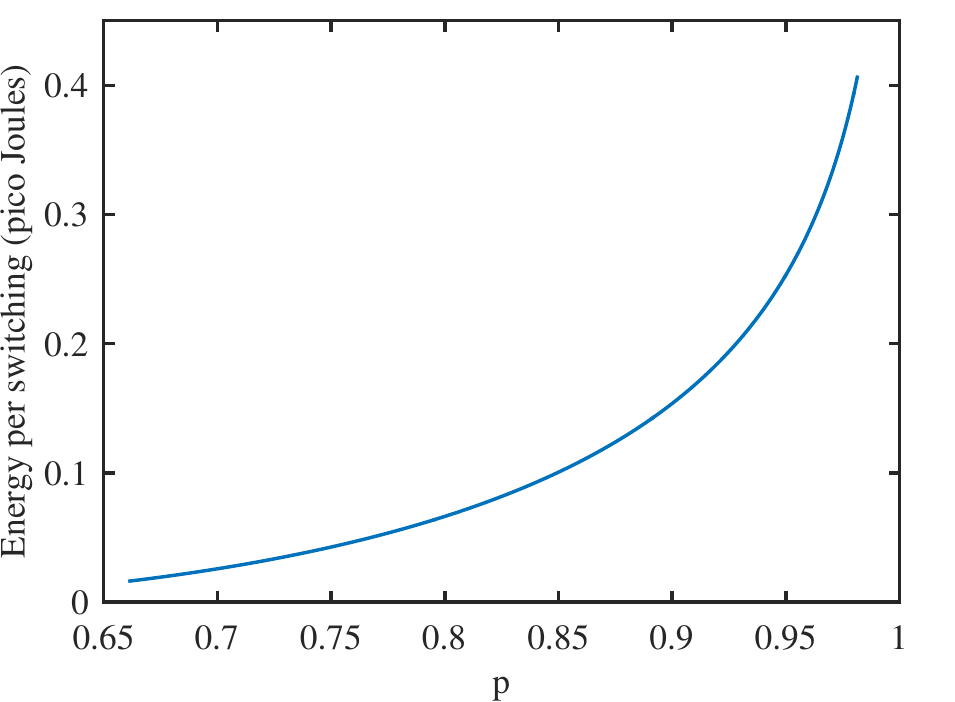}
\end{center}
\caption[Fig]{The quality-cost relationship referred to as the energy-probability relationship ($e$-$p$ relationship)
of a CMOS switch from~\citep{ACKP06,palem-cmos-japan}, where the energy spent $e$ increases exponentially with
 increasing probability of correctness $p$. Specifically, from~\citep{palem-cmos-japan}, the probability the operation is correct $p = 1- \frac{1}{2} {\sf erfc}\left (\frac{V_{dd}}{2 \sqrt{2} \sigma} \right )$, where $V_{dd}$ is the supply voltage, $\sigma$ is the standard deviation of additive gaussian noise, and the
complementary
error
function ${\sf erfc}(x) = \frac{2}{\sqrt{\pi}} \int_x^\infty e^{-u^2} du$.
}
 \label{fig:1}
\end{figure}

From a historical perspective, our work here builds naturally on the entire theme of
computing or reasoning in the presence of uncertainty~\citep{feige94, kenyon94}, which is
concerned about computing reliably in the presence of erroneous or uncertain information.
In this context, the bits in question have one of our two dimensions, quality alone. As a bit
is read, depending on the (unspecified) circumstance, it can be erroneous with a certain probability.
However, early work in inexactness (see Figure~\ref{fig:1}) showed that error or quality can in fact
be a parameter that can be related through the underlying physics (thermodynamics) to a cost, namely
energy. Thus, our work can be viewed as an extension of classical theoretical foundations to reasoning
about uncertainty, to one where we can trade uncertainty with cost: less uncertain being typically much
more expensive! In this sense, the degree of uncertainty of cost is an attribute that we can trade-off
based on what we wish to pay, as opposed to an externally imposed quantity that we are forced to
live with --- the theme of  prior work in this domain.

In order to have a clear basis for our discussion on computing, let us define a  {\em computational problem}  as evaluating a {\em Boolean
function} $f:\{0,1\}^n\rightarrow \mathbb{Z}$. Although quite elementary, such Boolean functions possess the ability to succinctly encode  input/output structure of problems in computing without compromising any of the mathematical rigor needed for careful analysis~\citep{boolean}.  Evaluating a Boolean function $f$ can be
represented as a {\em truth table} $T_f$ with $2^n$ rows corresponding
to each possible $n$-bit vector, and $n+1$ columns; we will use $c_f(i,j)$ (or just $c(i,j)$ when clear from context) to denote the $j$th element in the $i$th row of $T_f$. The first $n$
columns correspond to the $n$ input bit positions and the $(n+1)$th column
represents the output. To facilitate measuring the {\em quality} of the solution we produce, we view our output as a number in $\mathbb{Z}$. Since we are interested in inexact computing, let us suppose that an algorithm outputs $f'(I)$ for an input string $I \in \{0,1\}^n$, which could be different from the correct output $f(I)$. Then, the absolute difference $|f(I) - f'(I)|$ captures the magnitude  of error.  Although algorithms we consider in this paper are deterministic, it must be noted that the magnitude of error will be a random variable because input bits can be read incorrectly with some probability.

Let us consider some elementary examples of Boolean functions. The {\sf OR} problem for instance takes $n$ bits as input and outputs a 1 except when all input bits are zeros. The \unary~problem (or \ue ~in short) outputs the number of 1's in the input, while \binary~problem (or \be~in short) outputs $\sum_{i=0}^{n-1}2^i b_i$, where $(b_{n-1}, b_{n-2}, \ldots, b_0)$ is the input bit string.  These problems can be succinctly captured  in truth table representation as show in Table~\ref{tab:truth}.
\begin{table}[htp]
\caption{Truth table indicating the input and output for ${\sf OR}$, \ue, and \be.}
\begin{center}
\begin{tabular}{|c|c|c||c|c|c|} \hline
$b_2$ & $b_1$ & $b_0$ & {\sf OR} & \ue & \be \\ \hline \hline
0 & 0 & 0 & 0 & 0 & 0 \\ \hline
0 & 0 & 1 & 1 & 1 & 1 \\ \hline
0 & 1 & 0 & 1 & 1 & 2 \\ \hline
0 & 1 & 1 & 1 & 2 & 3 \\ \hline
1 & 0 & 0 & 1 & 1 & 4 \\ \hline
1 & 0 & 1 & 1 & 2 & 5 \\ \hline
1 & 1 & 0 & 1 & 2 & 6 \\ \hline
1 & 1 & 1 & 1 & 3 & 7 \\ \hline
\end{tabular}
\end{center}
\label{tab:truth}
\end{table}%

Let us now impose the two dimensions of cost and quality upon computational problems. Let us suppose that we invest a cost (or energy) $e_i$ on each bit $b_i$. Under finite values of $e_i$, there will be a loss in quality of the bits. Therefore, our view of the input bits will be restricted to an approximate bit vector $(b'_0, b'_1, \ldots, b'_n)$. We then (inspired by \citep{ACKP06}) model the quality in probabilistic terms as:
\[
\prob{b_i \ne b'_i} = 2^{-e_i}.
\]
Thus, under a finite energy budget, the outcome will be an approximation with the quality of the approximation increasing as the energy budget is increased.

Let us now consider a thought experiment  that   brings out two different modes in which algorithms can operate via a novel and elegant use of permutation groups. Our algorithm must assign each input bit with an energy value such that the total energy is within some finite budget. Suppose an adversary has the power to permute the energy values (but not the associated bits) by choosing a permutation $\pi$ uniformly at random from some permutation group $G$ (with the two extremes, the identity permutation $I_n$ and the symmetric group $S_n$, being of most interest). This will imply that the energy values associated with each bit could change. The algorithm is aware of $G$, but not the exact permutation $\pi$ chosen by the adversary. The question that comes to mind now is: how much can the adversary affect the outcome? Clearly, a well-designed optimal algorithm will try to compensate for this adversarial intrusion in some way, but can it succeed in mitigating the effects of this adversarial permutation?

Let us consider the {\sf OR} problem. If any one of the input bits is a 1, the correct output is a 1, and in this sense, every bit has equal impact on the output. Therefore, intuitively, an optimal algorithm under a finite energy budget must allocate equal energy to every bit. (We will formally prove this shortly.) When the energy values across the input bits are the same, the adversary is rendered toothless, and cannot impact the quality of the outcome in anyway. Therefore, we say that the {\sf OR} problem is {\em symmetric} under $G$. In fact, it is straightforward to see that {\sf OR} is symmetric even under the symmetric group $S_n$ that captures all $n!$ permutations. With a little thought, one can also surmise that the \unary~problem is also symmetric under $S_n$.

To build a contrast, let us consider the \binary~problem. When the adversary is restricted to $I_n$, an optimal algorithm will assign more energy to bit $b_{n-1}$ as its ``impact" on the output can be as high as $2^{n-1}$. However, when the adversary is empowered with $S_n$, the algorithm cannot favor bit $b_{n-1}$ over $b_0$. Therefore, the adversary can significantly affect the quality of the output. Thus, \binary\ is said to be an {\em asymmetric} problem.

There is a curious  perhaps even striking connection between the formulation above and
the role that symmetries and asymmetries played in physics at the turn of the last century.
Perhaps the earliest work that historians point to as a basis of this connection is
the work of Pierre Curie more than a century ago widely recognized as  {\em Curie's
principle}~\citep{curie}. Informally speaking, he ties the appearance of phenomena to when
a system is ``transformed.''
 We can interpret this to mean that for a change of some type to occur in the system, we need an
absence of symmetries. Physicists tend to think of outcomes or effects in terms
of phenomena and so another way of interpreting Curie's powerful concept is to
note that for a phenomenon to occur or exist, there must be inherent asymmetries
in the system. In our own case, the example of Boolean evaluation mentioned above
has a curious analogical connection since the quality of the output is the basis for observing
change and, as noted above, asymmetries are an essential part of being able to observe change---since
symmetric case of the \OR~function for example will not exhibit any change. Thus, in our case also,
the existence of asymmetries is inherently necessary to observe changes in the ``quality'' of what
is being computed. We note in passing that Curie's work is the first that we are aware of which
formally, in a mathematical sense, captures symmetries and asymmetries using a group theoretic formulation.
This approach is also reflected in our own work where we use permutation groups as a basis for characterizing
symmetries and asymmetries, as outlined above through the three examples. We remark in passing that
several conditions must hold for Curie's principle to be applicable, which physicists have documented extensively,
and our analogical remark is a substantial simplification of the concept.

Following Curie, several historical figures in physics pursued the use of group theoretic symmetries and asymmetries
as formal tools in physics with perhaps another analogical connection to our symmetric case---the OR problem
described above. Best characterized in the work of Emmy Noether~\citeyearpar{noether} and again based on a group
theoretic foundation---Lie groups to be specific---paraphrased, Noether's theorem states that a physical system
that embodies symmetries will result in (physical) quantities to be preserved or conserved under transformations.
Thus, changes in systems which embody such symmetries will not yield observable changes in physical quantities
of interest such as momentum and energy. In our own symmetric case, for example in the \OR~problem,  we can
interpret the quality of the output to be preserved under the (permutation) transformation of the energy vector
stated above. Consequently, we can conclude that for symmetric functions in our sense, the quality is conserved
under such transformations. We wish to add that our own framework in this paper was inspired by the style of
thinking central to symmetries and symmetry breaking in physics as exemplified by the two cases discussed above,
but that the connection is analogical---we do not wish to imply any novel insights in the physics domain based on our work
presented in the sequel.

While we illustrated symmetry and asymmetry at the two extremes ({\sf OR} and \ue~at one end and \be~at the other), we can clearly envision a host of intermediate problems for which the symmetry is broken at various levels. We capture the level to which the symmetry can be broken by a parameter called the {\em measure of broken symmetry} or \mobs, which we formally define shortly. We show that the \mobs~for symmetric problems like {\sf OR} and \ue~is 1, but exponential in $n$ for asymmetric problems like \be. (See Table~\ref{tab:results} for a complete listing of results.)

\begin{table}[htp]

\begin{center}
\begin{tabular}{|c|c|c|}
\hline
{\bf Problem Name}  & $\posb$  \\ \hline
\OR~Problem  & 1 \\ \hline
$\unary$  & 1\\ \hline
$\binary$  & $2^{\Omega(n)})$ \\ \hline
$\compare$ two $k$-bit numbers  & $2^{\Omega(k)}$ \\ \hline
$\sorting$ $k$-bit numbers & $2^{\Omega(k)}$ \\ \hline
\end{tabular}
\end{center}
\caption{The $\posb$ of problems showing dramatic difference under asymmetric situations.}
\label{tab:results}
\end{table}%

\section{Computational Model and a Related Property}

Going beyond traditional notions of algorithms designed for exact inputs, in our model of computation, we empower our algorithms with the ability (within well-defined bounds) to statistically alter  noise characteristics in the input data {\em and}  adapt its computational steps accordingly. We allow an algorithm $A$ to work within an energy budget $E$. Given $E$, the algorithm must specify an energy vector $E^A= (e_0, e_1, \ldots, e_{n-1})$ (with $\sum_j e_j \le E$) that will be used for reading the $n$ input bits. Moreover, the algorithm is also aware of the permutation group $P^A$ that will be employed for permuting the energy vector.
Let $\sigma$ be a permutation
drawn uniformly at random from $P^A$. Let us denote the energy vector $E^A$
permuted by $\sigma$ as $\vec{E}^A_\sigma = (e_{\sigma(0)},
e_{\sigma(1)}, \ldots, e_{\sigma(n-1)})$.
Algorithm $A$ does not have direct
access to the input row $i$ in the truth table, but rather receives the input row after
 the energy vector $E^A_\sigma$ has been applied to $i$. More precisely, each cell $c(i, j)$ is read correctly with
probability $1-2^{-e_{\sigma(j)}}$ and  incorrectly with probability $2^{-e_{\sigma(j)}}$.
 Note that while $A$ is aware of the permutation group $P^A$, the exact permutation $\sigma$  is not revealed to $A$.
When $P^A$ contains only the identity permutation, we say that $A$  is {\em clairvoyant} because the adversary is incapable of hiding or altering the association between energy values and bit positions.
Otherwise, we  say that $A$ is {\em blindfolded by $P^A$}.
When $P^A = S_n$, the symmetric group defined on all $n!$ permutations, we simply say that $A$ is {\em blindfolded}.  One may astutely wonder if the use of the permutation group is somehow restricting the adversary. We emphasize that this is not the case. We will go on to show that the best algorithm in the blindfolded setting elegantly corresponds  to the traditional computational model in which equal energy is invested in each bit.

Let us now consider how algorithm $A$ can compute $f$.
We interpret the
behavior of $A$ as an attempt to decipher either the input row $i$ that was fed in as
the input or another row that produces the same output.
Let us suppose that $A$ concludes that the input row is $A(i)$.
We restrict $A$ to behave deterministically even though the input row
that it actually sees is random due to the application of $E^A_\sigma$.
Therefore, we can view $c(A(i), n)$ as a random variable over the set of all
legal output values, i.e., all values in the $n$th column of $T_f$, but the
randomness is over the probability space induced by $E^A$ and $P^A$.
Thus, the worst case probability that $A$ is incorrect  is
$\max_i \prob{c(A(i), n) \ne c(i,n) }$ and the {\em quality of the algorithm} $Q(A)
$   (which in this case can be interpreted as the  expected number of
correct executions in the worst case taken over all input rows before we
see an incorrect execution) is simply $\min_i (1/\prob{c(A(i), n) \ne
c(i,n) })$.

 In the clairvoyant setting, each bit $j$ is read incorrectly with probability $2^{-
e_j}$ and therefore, a clairvoyant algorithm can assign energy in
proportion to the correctness required for bit $j$.
Let us now consider the consequence of blindfolding an algorithm.
\begin{claim}[Blindfolding Claim] \label{claim:blind}
When we blindfold an
algorithm,  the probability that any bit is read
incorrectly is at least $2^{-E/n}$, where $E = \sum_j e_j$.
\end{claim}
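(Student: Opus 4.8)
The plan is to reduce the claim to a single convexity inequality after correctly bookkeeping the randomness introduced by blindfolding. First I would fix an arbitrary bit position $j$ and use the fact that, under blindfolding, $P^A = S_n$ and $\sigma$ is drawn uniformly from $S_n$. For a uniformly random permutation, the image $\sigma(j)$ is itself uniform over $\{0, 1, \ldots, n-1\}$, since exactly $(n-1)!$ permutations send $j$ to any given slot $k$. By the model, cell $c(i,j)$ is read incorrectly with probability $2^{-e_{\sigma(j)}}$, so taking the expectation over $\sigma$ shows that the probability bit $j$ is read incorrectly is exactly the arithmetic mean
\[
\frac{1}{n!}\sum_{\sigma \in S_n} 2^{-e_{\sigma(j)}} \;=\; \frac{1}{n}\sum_{k=0}^{n-1} 2^{-e_k}.
\]
Crucially this quantity is independent of $j$, so whatever lower bound I establish will hold simultaneously for every bit position, which is what ``any bit'' in the statement requires.

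Next I would lower-bound this average using Jensen's inequality. The function $\phi(x) = 2^{-x}$ is convex, so the mean of the values $\phi(e_k)$ is at least $\phi$ evaluated at the mean of the $e_k$:
\[
\frac{1}{n}\sum_{k=0}^{n-1} 2^{-e_k} \;\geq\; 2^{-\frac{1}{n}\sum_{k=0}^{n-1} e_k} \;=\; 2^{-E/n},
\]
using $E = \sum_j e_j$ in the final equality. Equivalently, one may invoke the AM--GM inequality directly on the nonnegative terms $2^{-e_k}$, whose geometric mean is precisely $2^{-E/n}$. Chaining the two displays yields the claimed bound.

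The argument is essentially one clean use of convexity, so I do not expect a genuine obstacle; the only point requiring care is the probabilistic step, namely verifying that blindfolding with the full symmetric group makes $\sigma(j)$ uniform and therefore lets us replace the per-permutation error probability $2^{-e_{\sigma(j)}}$ with its average over all energy slots. I would also remark that equality holds exactly when all $e_k$ are equal, which anticipates the paper's broader thesis that the optimal blindfolded strategy is to distribute the energy budget uniformly across the bits.
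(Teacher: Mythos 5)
Your proposal is correct and follows essentially the same route as the paper: both reduce the error probability of any fixed bit to the arithmetic mean $\frac{1}{n}\sum_k 2^{-e_k}$ and then apply the AM--GM inequality (your Jensen step on the convex function $2^{-x}$ is the same inequality) to obtain the bound $2^{-E/n}$. Your version is in fact slightly more careful in the probabilistic bookkeeping --- explicitly noting that $\sigma(j)$ is uniform because $(n-1)!$ permutations map $j$ to each slot --- where the paper's notation reuses the index $j$ loosely, but the substance is identical.
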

\begin{proof} As a result of the blindfolding, each of the $n$ entries in the energy vector is
applied with equal probability and therefore, $\prob{\text{$j$th bit is read
incorrectly}} = \sum_j (1/n)2^{-e_{\sigma(j)}}$, where $\sigma$ is a random permutation.
Since the arithmetic mean is at least  the
geometric mean, $\sum_j (1/n)2^{-e_{\sigma(j)}} \ge \left( \Pi_{j=0}^{n-1} 2^{-e_{\sigma(j)}} \right)^{1/n}
= 2^{-E/n}$ as claimed.
\end{proof}
The implication of this claim is that the probability of error in each bit
cannot be improved by employing an energy vector with non-uniform
energy entries. This means that the best algorithm under the blindfolded setting corresponds to the current
trend in computing whereby equal effort or energy is expended in reading each bit,
but the clairvoyant setting opens up the possibility for variations.
 To capture a sense of the price we pay when employing uniform energy
vectors, we define the {\em measure of broken symmetry} (MoBS) for computing $f$ as
\[
\posb(f) =\max_E \max_i \frac{\prob{c(\BF_E(i), n) \ne c(i,n) }}{\prob{c(\CV_E(i), n) \ne c(i,n)} },
\] where $\CV_E$ and $\BF_E$ are the  optimal clairvoyant and optimal
blindfolded algorithms, resp., while both are restricted to an energy budget of $E$.
 When $\mobs(f)=1$ for the problem of computing some function $f$, then, we can infer that computing $f$ is identical under both the clairvoyant and the blindfolded setting. This reveals an inherent symmetry in the problem.
\section{Symmetric Problems}

To see how the notion of $\posb$ helps us understand the
applicability of non-uniform energy distributions, let us consider two
fundamental canonical problems. Consider first the $\unary$ problem (in
short $\ue$ problem) that requires us to count the number $k$ of bits set
to 1.   Consider  the optimal clairvoyant algorithm $\CV$ and let us say
that it employs energy vector  $E_{\CV}= (e_0, e_1, \ldots, e_{n-1})$.
Let $(b_0, b_1, \ldots, b_{n-1})$ be the input vector and $(b'_0, b'_1, \ldots, b'_{n-1})$
be the vector read by the algorithm $\CV$ under the energy vector $E_{\CV}$.

We claim that $E_{\CV}$ is in fact the uniform vector in which each
$e_j = E/n$, where $E = \sum_j e_j$ is the total energy budget. For
the sake of contradiction, let us assume  an arbitrary $E_{\CV}$.
Consider bit $j$. Since $\prob{b_j \ne b'_j} = 2^{-e_j}$, we can extend
this to stating that $\prob{b_j = 1} = b'_j (1- 2^{-e_j}) + (1-b'_j) 2^{-e_j}$.
Define Poisson trial variable $X_i$ that takes the value 1 with probability
$ b'_j (1- 2^{-e_j}) + (1-b'_j) 2^{-e_j}$ and 0 otherwise. Since each bit $j$ is  read
independently of other bits, the algorithm can be viewed as optimizing
how well $X = \sum_j X_j$ estimates $\sum_j b_j$. Since $X$ is
the sum of $n$ Poisson trials, we can get the optimal estimation
by minimizing the variance of $X$.  Notice that $\var{X_j} = (1-2^{-e_j})2^{-e_j}$.
Notice that for any two positions $j$ and $j'$,
their combined variance $\var{X_j}+\var{X'_j}  = (1-2^{-e_j})2^{-e_j}+(1-2^{-e_j'})2^{-e_j'}$
is minimized when $e_j = e_j'$ when $e_j+e_j'$ is fixed. This can be easily
extended to computing $\var{X}$ where we can show that the variance
is minimized when $e_j = E/n$ for every $j$. We can therefore claim
that $\CV$ uses a uniform energy vector.
Thus, $\posb(\unary) = 1$ because the same optimal algorithm can be employed in the blindfolded setting.

More generally,  $\mobs(f)=1$ for all symmetric Boolean functions $f$ where the outcome $f(I)$ equals $f(\sigma(I))$, where $\sigma$ is any permutation of the input bits $I$~\citep{boolean}. The {\sf OR} problem discussed earlier is a canonical example. But the set of symmetric problems is larger than the the set of symmetric Boolean functions. Consider for example, the {\sf Tribes} problem where (in a simplified sense), the $n$ input bits are partitioned into two tribes: the first $n/2$ bits and the second $n/2$ bits. The output of the {\sf Tribes} function is a one iff at least one of the tribes consists of all 1 bits. This function is not a symmetric Boolean function as is witnessed by {\sf Tribes}(0011)=1 while {\sf Tribes}(0101)=0. However, the problem of evaluating the {\sf Tribes} function is a symmetric problem.

\begin{figure}[htbp]
\begin{center}
\includegraphics[scale=0.45]{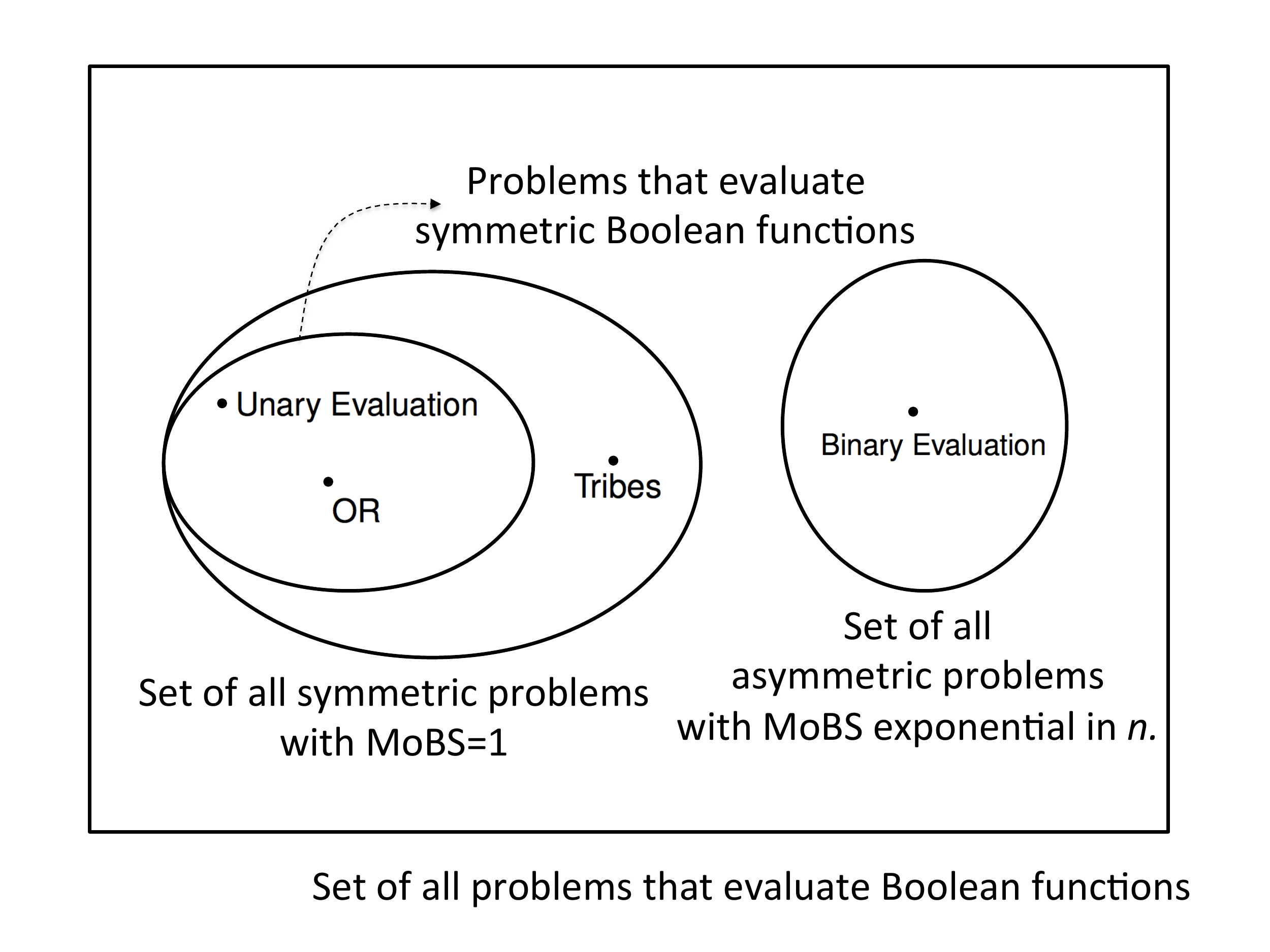}
\caption{A Venn diagram showing the various classes of computational problems that evaluate Boolean functions.}
\label{default}
\end{center}
\end{figure}

\section{Asymmetric Problems}
Thankfully, most real-world problems have quite the opposite structure
wherein there is significant benefit to employing non-uniform energy
vectors. To illustrate this, consider $\binary$ (or $\be$ for short) where
we have to compute $\sum_{j=0}^{n-1} 2^j b_j$, which arguably is the
most fundamental problem. Clearly, the $(n-1)$th bit has impact $2^j$,
which is significantly more  important than the $0$th bit and
therefore displays a marked difference from the $\unary$. One can
intuitively see that non-uniform energy distribution ought to provide
significant improvement.

In order to formalize this intuition about $\binary$,
consider an energy budget of $E=n(n+1)/2$. The blindfolded algorithm will use $e=(n+1)/2$ units of energy for each bit.
The behavior of the blindfolded algorithm can be viewed as essentially evaluating $\sum_j w_j (b_j'(1-2^{-e}) +  (1-b_j')2^{-e})$. Without loss in generality, if we assume that bit $b_{n-1}'=0$, then, the expected evaluated quantity --- taking just the most significant bit into account and also noting that $w_{n-1}=2^{n-1}$ --- is at least $2^{n-1}\cdot 2^{-(n+1)/2} = 2^{(n-3)/2}$.
However, in the clairvoyant setting, we can set $e_j = j+1$, which respects the energy budget constraint. The expected error is $\sum_j  w_j 2^{-j-1} = n/2$. Under characteristic quality function defined as the reciprocal of the expected error, therefore, $\posb(\be) = \Omega(2^{n/2}/n)$.  (This analysis is based on the unpublished work by \citet{CP11}.)

Let us similarly analyze $\compare$, the problem of comparing two $k= n/2$ bit numbers $x$ and $y$ to evaluate which one is larger. In the inexact computing perspective, it is more important to distinguish the two numbers when they are far apart, i.e., when $|x-y|$ is large, than when they are very close to each other.  With that perspective in mind, we define the quality function to be $1/(|x-y| \cdot \prob{\text{$x$ and $y$ are incorrectly compared}})$. Let us consider an energy budget of $k(k+1)/2$. Under the blindfolded setting, each bit-wise comparison will get $(k+1)/2$ units of energy, but in the quality optimal setting, we can assign $j+1$ units of energy for comparing $j$th bit positions, $0 \le j \le k-1$. Extending the argument from $\be$, we will get $\posb(\compare) = 2^{\Omega(n)}$.

Let us now consider the $\sorting$ problem that takes $L$ numbers $(x_1, x_2, \ldots, x_L)$, each $k$ bits long, and reorders them in non-decreasing order. This will make the total number of input bits $n = kL$.  Consider a pair of numbers $x_{\ell_1} $ and $x_{\ell_2}$, $\ell_1 \ne \ell_2$.  As in the case of $\compare$, we can tolerate the two numbers being wrongly ordered in the output sequence when $|x_{\ell_1}  - x_{\ell_2}|$ is small, but not when the difference is large.  Thus, as a natural extension to the quality function defined for $\compare$, we consider the following as the quality function for $\sorting$ wherein the probability of error in the ordering of a pair of numbers is weighted by their magnitude difference:
\begin{equation}
Q_\sorting = \frac{1}{\sum_{\ell_1 < \ell_2} \left(|x_{\ell_1} - x_{\ell_2}| \cdot \prob{\text{$x_{\ell_1}$ and $x_{\ell_2}$ are wrongly ordered}}\right)}. \label{sorting}
\end{equation}
Consider the input in which $L/2$ of the numbers are of the binary form $(100\cdots 0)$ and the remaining are of the form $(000\cdots 0)$.
There are at least $L^2/4$ pairs that we call {\em expensive} pairs for which the absolute magnitude difference is $2^{k-1}$.  Let us now consider the expensive pairs under the blindfolded and the quality optimal scenarios. Extending the ideas from comparison, their probabilities of wrongly comparing an expensive pair under the blindfolded setting is at least $2^{-(k+1)/2}$. In the quality optimal setting, however, the probability will reduce to $2^{-k}$. Substituting these probability values into Equation \ref{sorting} and subsequently into the formula for $\posb$, we get:
\begin{equation}
\posb(\sorting) \ge \frac{\sum_{\text{$\ell_1$ and $\ell_2$ are expensive}} \left(|x_{\ell_1} - x_{\ell_2}| \cdot  2^{{-(k+1)/2}} \right)}{\sum_{\text{$\ell_1$ and $\ell_2$ are expensive}} \left(|x_{\ell_1} - x_{\ell_2}| \cdot  2^{-k}\right)} = 2^{\Omega(k)}.
\end{equation}

\section*{Remarks}
Our work is built on the complexity theoretic philosophy of understanding the inherent resource needs of problems under various computational models,  and classifying them based on those insights. From a utilitarian perspective, we now know that symmetric problems will not lead to significant gains from the principles of inexact computing. Thankfully, most real world problems display quite a bit of asymmetry where, we believe, opportunities abound, thus opening the door for future work in optimization techniques a practitioner may employ.  Moreover, hardware that is capable of trading error for energy is not just a theoretical possibility, but a reality with several groups spanning industry and academia actively pursuing their fabrication. In fact, our model with its emphasis on memory errors is in fact a reflection of current pursuits in memory technology~\citep{FKBSA15}.

Finally, for clarity, we have focussed on problems where the quality of bits are probabilistically  independent of each other, but this is somewhat simplistic as evidenced for example by the effect of carry propagating through  adders~\citep{C08,P16}. Arbitrary Boolean functions are likely to be riddled with much more complex dependencies and  we believe that there is significant scope for future work.

\section*{Acknowledgements}
We thank Dror Fried, Anshumali Shrivastava,  and Eli Upfal for providing us with useful comments. We also thank Jeremy Schlachter for using the analytical model from \citep{palem-cmos-japan} to recreate Figure~\ref{fig:1}.

\small

\bibliographystyle{plainnat}

\end{document}